\newtheorem{Example}{Example}
\newtheorem{theorem}{Theorem}
\newtheorem{lemma}{Lemma}
\newtheorem{remark}{Remark}
\newcommand\blfootnote[1]{%
  \begingroup
  \renewcommand\thefootnote{}\footnotetext{#1}%
  \addtocounter{footnote}{-1}%
  \endgroup
}
\begin{document}

\title{\textbf{Certification of Linear Inclusions for Nonlinear Systems}}
\date{}

\maketitle

\begin{center}
\author{Yehia Abdelsalam,}
\author{Sebastian Engell.}
\end{center}

\blfootnote{The authors are with the Process Dynamics and Operations Group, Biochemical Engineering Department, TU Dortmund University, Dortmund, Germany.}
\blfootnote{Emails: yehia.abdelsalam@tu-dortmund.de,  sebastian.engell@tu-dortmund.de.}
\blfootnote{This work was supported by TU Dortmund.}

\abstract{In this work, we propose novel method for certifying if a given set of vertex linear systems constitute a linear difference inclusion for a nonlinear system. The method relies on formulating the verification of the inclusion as an optimization problem in a novel manner. The result is a Yes/No certificate. We illustrate how the method can be useful in obtaining less conservative linear enclosures for nonlinear systems. 
}
\section{Introduction}
Differential (or difference) inclusions describe the evolution of dynamic systems using set valued maps \cite{aubin1984differential}. They have played a significant role in analyzing solutions of differential (or difference) equations with discontinuous right hand sides \cite{Filipov,Filippov1988DifferentialEW,Angeli}.    

A special class of a difference inclusions is the Linear Difference Inclusion (LDI) which can be described as $x^+\in \mathcal{F}(x,u)=\{Ax+Bu, \forall (A,B)\in \mathbf{D} \}$, where $\mathbf{D}$ is a compact set. This description is important in control theory because it is often used to analyze robust stability and to design controllers and observers for nonlinear and uncertain systems \cite{Boyd:94,Blanchini_set_book,Guerra}. The main idea of these analysis and design methods which rely on LDIs has its roots in \cite{Lure} which has resulted in the development of absolute stability theory (see for example \cite{popov,pyatniskij_1}). Since then, and a lot of research has emerged on the stability and control of LDIs \cite{MOLCHANOV198959,DAAFOUZ2001355,HU2007685,HU2010190}. 

 Takagi-Sugeno and other linear parameter varying representations of nonlinear systems \cite{Takagi_first,SHAMMA1991559,APKARIAN19951251,TS_control,TS_control_2,mohammadpour2012control} are usually based on convex combinations of several vertex linear systems, which is often called a polytopic LDI. For a wide range of systems, it is possible to determine the vertex linear systems using the nonlinearity sector approach \cite{sector_non}. If it is not possible to apply the sector nonlinearity approach (see Example \ref{Example_2} below), the mean value theorem can be applied (see \cite{Boyd_convex_op,Blanchini_set_book} for example) which can be very conservative.

In this paper, we devise a simple and novel method for checking if a given set of vertex linear systems is an actual LDI for a nonlinear system or not. Our method can be useful in reducing the conservatism of the representation by a LDI as shown in Example \ref{Example_2}. 

\section{Notations}
The set of real numbers is denoted by $\mathbb{R}$. Let $a \in \mathbb{R}^n$ be a real valued vector of dimension $n$. The notation $a\geq0$, $a>0$, $a\leq 0$, $a<0$ denote element-wise inequalities. A vector which has all its elements equal to one is denoted by $\mathbf{1}$, where the dimension is inferred from the context. The convex hull operator is denoted by $Co(\cdot)$. A superscript $T$ indicates the transpose operation.
\section{Problem Description}
Consider a dynamic system described by
\begin{equation}\label{eqn_sys_nonlinear}
	x^+=f(x,u), 
\end{equation}
where, $x \in \mathbb{R}^{n_x}$ and $u \in \mathbb{R}^{n_u}$ are the state and input of the system, $f:\mathbb{R}^{n_x}\times \mathbb{R}^{n_u} \times \to \mathbb{R}^{n_x}$ is a nonlinear map and $x^+$ denotes the successor state. 

Let $(x_s,u_s)$ denote an equilibrium pair for the system, i.e., $x_s=f(x_s,u_s)$.

Define the deviations from the equilibrium 
\begin{equation*}
    \delta x=x-x_s,
\end{equation*}
\begin{equation*}
    \delta u=u-u_s,
\end{equation*}
\begin{equation*}
    \delta x^+=x^+-x_s=f(x,u)-f(x_s,u_s).
\end{equation*}

Consider a compact set 
\begin{equation*}
    \mathcal{Z}=\{(x,u)| Fx+Eu\leq \mathbf{1}\}
\end{equation*}
which contains $(x_s,u_s)$ in its interior.
We want to verify that a set of vertex matrices $(\Bar{A}_i,\Bar{B}_i)$, $i \in \{1,2,\dots,n_d\}$ qualify as a polytopic LDI for the nonlinear system around the equilibrium pair $(x_s,u_s)$ inside the set $\mathcal{Z}$, i.e., given an equilibrium pair $(x_s,u_s)$, we want to verify if 
\begin{equation*}
    \delta x^+ \in Co(\{ \bar{A}_i\delta x+\bar{B}_i \delta u, \textbf{ } i \in \{1,2,\dots,n_d\}), \textbf{ }\forall (x,u) \in \mathcal{Z}.
\end{equation*}

To this end, we need to prove that for each $(x,u) \in \mathcal{Z}$ there exists a state and input dependent vector
\begin{equation*}
 \alpha(x,u)=\begin{pmatrix}
     \alpha_{1}(x,u) &\alpha_{2}(x,u)& \dots &\alpha_{n_d}(x,u)
 \end{pmatrix}^T
 \end{equation*}
such that 
\begin{equation}\label{alpha_1}
    \alpha_{i}(x,u)\geq 0, \textbf{ }\forall i \in \{1,2,\dots,n_d\},
\end{equation}
\begin{equation}\label{alpha_2}
   \sum^{n_d}_{i=1}\alpha_{i}(x,u)=1 
\end{equation}
and
\begin{equation}\label{LDI_eqn_1}	\delta x^+=\sum^{n_d}_{i=1}\alpha_{i}(x,u)(\bar{A}_i\delta x+\bar{B}_i \delta u), \textbf{ } \forall (x,u)\in \mathcal{Z}.
\end{equation}

\begin{remark}
    We do not assume that $(0,0)$ is an equilibrium pair for the system. If the origin is not an equilibrium pair, then it is not possible to find a LDI such that $x^+=\sum^{n_d}_{i=1}\alpha_{i}(x,u)(\bar{A}_i x+\bar{B}_i u)$, while it can be still possible to find a LDI in the sense of \eqref{LDI_eqn_1}.
\end{remark}
 \section{Main Result}
 Note that \eqref{alpha_1}, \eqref{alpha_2} and \eqref{LDI_eqn_1} must hold $\forall (x,u)\in \mathcal{Z}$ which is not possible to check in the current form. 

Consider the following version of Farkas' Lemma (see \cite{Boyd_convex_op,Manga_book} for example).
\begin{lemma}\label{Farkas_lemma}
	Let $M \in \mathbb{R}^{n\times m}$ and $b\in \mathbb{R}^n$. Then exactly one of the following statements is true:
	\begin{itemize}
		\item[1.] There exists $z\in \mathbb{R}^m$ with $z\geq 0$ such that $Mz=b$.
		\item[2.] There exists $p\in  \mathbb{R}^n$ such that $M^Tp\geq 0$ and $b^Tp<0$.
	\end{itemize}
\end{lemma}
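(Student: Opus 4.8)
The plan is to establish the dichotomy in two halves: first that statements 1 and 2 cannot both hold, and then that at least one of them must hold. The mutual exclusivity is the easy half. If there were $z \ge 0$ with $Mz = b$ and simultaneously $p$ with $M^{T}p \ge 0$ and $b^{T}p < 0$, then $0 > b^{T}p = (Mz)^{T}p = z^{T}(M^{T}p) \ge 0$, since a nonnegative vector dotted with a nonnegative vector is nonnegative; this contradiction shows at most one alternative can hold.

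For the substantive half I would show that the failure of statement 1 forces statement 2. The natural object is the conical hull of the columns of $M$, namely $C = \{\, Mz : z \in \mathbb{R}^{m},\ z \ge 0 \,\}$, which is convex, is closed under multiplication by nonnegative scalars, and contains $0$. The key technical ingredient is that $C$ is closed. I would obtain this by a Carath\'eodory-type reduction: every element of $C$ is already a nonnegative combination of some linearly independent subset of the columns of $M$, there are only finitely many such subsets, and the cone generated by a linearly independent set is the image of a closed orthant under an injective linear map and hence closed (the inverse extends to a continuous linear map on the spanned subspace, and that subspace is closed in $\mathbb{R}^{n}$); a finite union of closed sets is closed.

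Granting closedness of $C$, suppose statement 1 fails, i.e., $b \notin C$. By the separating hyperplane theorem applied to the point $b$ and the nonempty closed convex set $C$, there exist $p \in \mathbb{R}^{n}$ and $\gamma \in \mathbb{R}$ with $b^{T}p < \gamma \le y^{T}p$ for every $y \in C$. Plugging in $y = 0 \in C$ gives $\gamma \le 0$, hence $b^{T}p < 0$. Moreover, if $y^{T}p < 0$ for some $y \in C$, then $ty \in C$ for all $t \ge 0$ and $(ty)^{T}p = t\,(y^{T}p) \to -\infty$, contradicting $y^{T}p \ge \gamma$; therefore $y^{T}p \ge 0$ for all $y \in C$. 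Taking $y = M e_{i}$ for each standard basis vector $e_{i}$ yields $(M^{T}p)_{i} \ge 0$, i.e., $M^{T}p \ge 0$, which together with $b^{T}p < 0$ is precisely statement 2.

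The step I expect to be the main obstacle is the closedness of the finitely generated cone $C$: the strict separation of a point from a convex set requires the set to be closed, and the image of a closed set under an arbitrary linear map need not be closed, so the Carath\'eodory reduction to finitely many simplicial cones is where real care is needed. If one prefers to avoid the topological argument altogether, I would instead carry out a Fourier--Motzkin elimination (an induction on the number of columns $m$), which proves the same alternative by a purely algebraic manipulation of inequalities.
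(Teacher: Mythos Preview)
Your argument is correct: the mutual-exclusivity computation is clean, and the separating-hyperplane proof of the nontrivial direction, with the Carath\'eodory reduction used to guarantee closedness of the finitely generated cone $C$, is a standard and complete route. The only delicate step you rightly flag is the closedness of $C$, and your sketch (finitely many simplicial cones, each the injective linear image of a closed orthant) is sound.

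There is nothing to compare against, however: the paper does not supply its own proof of this lemma. It is quoted as a known result with a pointer to the literature (the text reads ``see \cite{Boyd_convex_op,Manga_book} for example'') and is then used as a black box in the proof of the main theorem. Your write-up is essentially the argument one finds in those references, so it would serve perfectly well as a self-contained proof if the paper wanted one.
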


Let $y\in \mathbb{R}^{n_x+1}$. Consider the following optimization problem.
\begin{subequations}\label{LDI_optimization}
	\begin{alignat}{4}
		& \underset{x, u, y }{\text{minimize}} \begin{pmatrix}
			(f(x,u)-x_s)^T & 1
		\end{pmatrix}   y \label{Cost_op}\\
		&  \text{subject to:} \notag \\
		& (x,u)\in \mathcal{Z},\\
		& \begin{pmatrix}
			(\bar{A}_1 \delta x+ \bar{B}_1 \delta u)^T& 1 \\ 
			(\bar{A}_2 \delta x+ \bar{B}_2 \delta u)^T & 1 \\
			\vdots & \vdots\\
			(\bar{A}_{n_d} \delta x+ \bar{B}_{n_d} \delta u)^T & 1 
		\end{pmatrix} \begin{pmatrix}
			y_1 \\ y_2 \\ \vdots \\y_{n_x+1}
		\end{pmatrix}\geq \begin{pmatrix}
		0 \\ 0 \\ \vdots \\0
		\end{pmatrix}, \label{cons_compare_farkas}\\
        & \sum^{n_x+1}_{j=1}y^2_j\leq c, \label{norm_cons}
	\end{alignat}
\end{subequations}
where $y_1,y_2,\dots,y_{n_x+1}$ are the elements of the vector $y$, and $c$ is some positive scalar. 

Let $x^*$, $u^*$, $y^*$ denote the optimal solution of \eqref{LDI_optimization}.
 
 The following theorem summarizes our main result.\\
 
 \begin{theorem}
The matrices $(\bar{A}_i,\bar{B}_i)$, $i\in \{1,2,\dots,n_d \}$ define a LDI for the system \eqref{eqn_sys_nonlinear} around an equilibrium pair $(x_s,u_s)$ if and only if for any $c>0$, \eqref{LDI_optimization} has a non-negative global optimal value.
\end{theorem}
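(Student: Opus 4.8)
The plan is to turn the (infinitely many) pointwise inclusion conditions \eqref{alpha_1}--\eqref{LDI_eqn_1} into a single linear feasibility question at each $(x,u)\in\mathcal Z$, dualize it with Farkas' Lemma, and then recognise the dual infeasibility certificate as exactly a feasible point of \eqref{LDI_optimization} with negative cost. Concretely, fix $(x,u)\in\mathcal Z$, write $v_i := \bar{A}_i\,\delta x + \bar{B}_i\,\delta u$ for $i\in\{1,\dots,n_d\}$, and set
\[
M := \begin{pmatrix} v_1 & \cdots & v_{n_d}\\ 1 & \cdots & 1\end{pmatrix}\in\mathbb{R}^{(n_x+1)\times n_d},\qquad b := \begin{pmatrix}\delta x^+\\ 1\end{pmatrix}\in\mathbb{R}^{n_x+1}.
\]
Then there exist $\alpha_i(x,u)$ obeying \eqref{alpha_1}, \eqref{alpha_2}, \eqref{LDI_eqn_1} at this point if and only if there exists $\alpha\geq 0$ with $M\alpha=b$, which is precisely statement~1 of Lemma~\ref{Farkas_lemma}. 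Consequently the inclusion holds at $(x,u)$ if and only if statement~2 of that lemma fails, i.e.\ if and only if there is \emph{no} $y\in\mathbb{R}^{n_x+1}$ with $M^Ty\geq 0$ and $b^Ty<0$.

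The next step is pure bookkeeping: $M^Ty\geq 0$ is exactly constraint \eqref{cons_compare_farkas} written out row by row, and, using $\delta x^+ = f(x,u)-x_s$, the number $b^Ty$ equals $\begin{pmatrix}(f(x,u)-x_s)^T & 1\end{pmatrix}y$, which is the objective \eqref{Cost_op}. Hence, ranging over $(x,u)\in\mathcal Z$, the matrices $(\bar{A}_i,\bar{B}_i)$ define a LDI on $\mathcal Z$ if and only if there is no triple $(x,u,y)$ with $(x,u)\in\mathcal Z$, $y$ satisfying \eqref{cons_compare_farkas}, and negative cost. It then remains to reconcile this with the norm bound \eqref{norm_cons} and with the phrase ``global optimal value''. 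Since statement~2 of Lemma~\ref{Farkas_lemma} is positively homogeneous in $y$ (if $y$ is a certificate so is $\lambda y$ for every $\lambda>0$, and necessarily $y\neq 0$ because $b^Ty<0$), any certificate can be rescaled to satisfy $\sum_{j}y_j^2\leq c$ for any prescribed $c>0$, while conversely every $y$ obeying \eqref{norm_cons} remains admissible. Thus \eqref{norm_cons} affects neither the existence of a negative-cost feasible point nor the sign of the optimal value of \eqref{LDI_optimization}, for every $c>0$. Finally, $(x,u,y)=(x_s,u_s,0)$ is always feasible with cost $0$, so the optimal value is at most $0$; therefore ``non-negative optimal value'' is equivalent to ``no feasible point has negative cost'', and chaining the equivalences above yields the theorem.

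For completeness I would present the two implications explicitly. If the $(\bar{A}_i,\bar{B}_i)$ form a LDI and, for some $c>0$, \eqref{LDI_optimization} had negative optimal value, then there would be a feasible $(x,u,y)$ with $(x,u)\in\mathcal Z$, $M^Ty\geq 0$, $b^Ty<0$; by Lemma~\ref{Farkas_lemma} no $\alpha\geq 0$ solves $M\alpha=b$ at that $(x,u)$, contradicting the inclusion. Conversely, if the $(\bar{A}_i,\bar{B}_i)$ do not form a LDI, pick $(x_0,u_0)\in\mathcal Z$ at which $\delta x^+\notin Co(\{v_i\})$; Lemma~\ref{Farkas_lemma} supplies $p$ with $M^Tp\geq 0$, $b^Tp<0$, and scaling $y_0:=\lambda p$ with $\lambda>0$ small enough that $\lambda^2\|p\|^2\leq c$ gives a feasible triple $(x_0,u_0,y_0)$ of negative cost, so the optimal value is negative for that $c$. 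The only genuinely delicate point is the role of \eqref{norm_cons}: without it the optimization is a conic program whose value is either $0$ or $-\infty$, and one must argue that truncating the dual cone to a Euclidean ball is harmless precisely because a Farkas certificate is a ray; I would also note that continuity (or mere boundedness) of $f$ on the compact set $\mathcal Z$ is what guarantees the infimum is finite and attained, so that ``global optimal value'' is literally a minimum, though the statement holds verbatim even if it is read as an infimum.
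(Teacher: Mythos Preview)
Your proof is correct and follows essentially the same route as the paper: you encode the pointwise inclusion as the linear system $M\alpha=b$, $\alpha\ge 0$, apply Farkas' Lemma with the same identifications $z=\alpha$, $p=y$, and read off statement~2 as a negative-cost feasible point of \eqref{LDI_optimization}. Your treatment is in fact more careful than the paper's in two respects---you make the homogeneity/rescaling argument for \eqref{norm_cons} explicit, and you observe that $(x_s,u_s,0)$ is always feasible with cost~$0$ so that ``non-negative optimal value'' is equivalent to ``no feasible point has negative cost''---but the underlying argument is the same.
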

\begin{proof}
Equations \eqref{alpha_1}, \eqref{alpha_2}, \eqref{LDI_eqn_1} can be equivalently written as 
\begin{equation}\label{alpha_for_comp}
    \alpha(x,u)\geq 0,
\end{equation}
\begin{equation}\label{first_part_farkas}
	\begin{pmatrix}
		f(x,u)-x_s \\ 1
	\end{pmatrix}=  \begin{pmatrix}
		\bar{A}_1 \delta x+ \bar{B}_1 \delta u& \bar{A}_2 \delta x+ \bar{B}_2 \delta u &\dots& \bar{A}_{n_d}\delta x+ \bar{B}_{n_d} \delta u \\ 1 & 1 & \dots & 1
	\end{pmatrix}\alpha(x,u).
\end{equation}

 The proof then follows by assigning $z$, $M$, $b$ and $p$ from Lemma \ref{Farkas_lemma} as follows:
	\begin{equation*}
		z=\alpha(x,u),
	\end{equation*}
	\begin{equation*}
		M=\begin{pmatrix}
		\bar{A}_1 \delta x+ \bar{B}_1 \delta u& \bar{A}_2 \delta x+ \bar{B}_2 \delta u &\dots& \bar{A}_{n_d} \delta x+ \bar{B}_{n_d} \delta u \\ 1 & 1 & \dots & 1
		\end{pmatrix},
	\end{equation*}
	\begin{equation*}
		b=\begin{pmatrix}
			f(x,u)-x_s\\1
		\end{pmatrix}
	\end{equation*}
	\begin{equation*}
		p=y, 
	\end{equation*}
and comparing the first statement of Lemma \ref{Farkas_lemma} with \eqref{alpha_for_comp} and \eqref{first_part_farkas}, and the second statement of Lemma \ref{Farkas_lemma} with \eqref{Cost_op} and \eqref{cons_compare_farkas}.

A negative optimal value of $\eqref{LDI_optimization}$ implies that at $x^*$, $u^*$, the second statement of Lemma \ref{Farkas_lemma} holds with $p=y^*$. This means that first statement of Lemma \ref{Farkas_lemma} does not hold at $x^*$, $u^*$, i.e., there does not exist $\alpha(x^*,u^*)$ that satisfies \eqref{alpha_1}, \eqref{alpha_2} and \eqref{LDI_eqn_1} at $x=x^*$ and $u=u^*$. 

To see the converse, note that a non-negative global optimum implies that for all admissible $(x,u)\in \mathcal{Z}$, \eqref{alpha_1}, \eqref{alpha_2} and \eqref{LDI_eqn_1} hold (i.e., the first statement of Lemma \ref{Farkas_lemma} holds $\forall (x,u) \in \mathcal{Z}$), and as a result, $(\bar{A}_i,\bar{B}_i)$, $i \in \{1,2,\dots, n_d\}$ is a LDI for the nonlinear system.

The constraint \eqref{norm_cons} only restricts the norm of the vector $y$ and does not change the sign of the objective function value or the constraints. 
\end{proof}

\begin{remark}
    A feasible and not necessarily optimal solution of \eqref{LDI_optimization} which results in a negative value of the objective function suffices for falsifying that $(\Bar{A}_i,\Bar{B}_i)$ is a LDI for the nonlinear system around the equilibrium $(x_s,u_s)$. The converse is not true, i.e., a non-negative local optimum of \eqref{LDI_optimization} is not sufficient for certifying that $(\Bar{A}_i,\Bar{B}_i)$ is a LDI for the nonlinear system around the equilibrium $(x_s,u_s)$.
\end{remark}
\begin{remark}
    Eliminating the constraint \eqref{norm_cons} from the optimization \eqref{LDI_optimization} does not affect the theoretical validity of the result. The constraint \eqref{norm_cons} was introduced for the purpose of constraining the decision variables of the optimization to compact sets. This can be very helpful for the numerical convergence of the solvers.
\end{remark}
A disadvantage of the proposed certification method is that \eqref{LDI_optimization} is a non-convex optimization problem. For non-convex optimization, finding a global solution is very difficult. Two main streams of research and algorithms exist for global non-convex optimization \cite{book:opt,LOCATELLI2021100012}. These are the stochastic/heuristic approaches and the deterministic/exact approaches, where each stream has its own advantages and disadvantages. For a rigorous verification, a deterministic global solver like Baron \cite{Baron} must be used. If a non-global solver (Ipopt \cite{IPOPT} for example) is employed, a strictly speaking non-rigorous approach is to solve the optimization problem \eqref{LDI_optimization} for a large number of random initial values to reduce the chance that a local optimum for which \eqref{Cost_op} is negative is missed (for stochastic global algorithms the probability of finding the global optimum tends to one as the number of samples/regions is increased).
\section{Illustrative Examples}
\begin{Example}\label{Example_1}
    Consider the following nonlinear system
    \begin{equation*}
        x^+_1=x^3_1+x^2_2,
    \end{equation*}
    \begin{equation*}
        x^+_2=x_1x_2,
    \end{equation*}
    in the compact region $x_1,x_2 \in [-1,1]$.
    The origin is an equilibrium for this system. We want to determine a LDI for the system around the origin.
    The system can be written in the following matrix form:
    \begin{equation*}
        \begin{pmatrix}
            x^+_1 \\x^+_2
        \end{pmatrix}=\begin{pmatrix}
	x_1^2 & x_2 \\ \dfrac{x_2}{2} & \dfrac{x_1}{2}
\end{pmatrix}  \begin{pmatrix}
            x_1 \\x_2
        \end{pmatrix}\label{eqn_example_mat},
    \end{equation*}
    i.e, $x^+=A(x)x$.
   We try to find a LDI for the system by considering two uncertain parameters in $A(x)$. To this end, we define the uncertain parameters in $A(x)$ as $a_1=x_1$ and $a_2=x_2$ over the range $[-1,1]$. Hence we have the four vertex matrices; $\bar{A}_1=\begin{pmatrix}
			1 & -1 \\ -\dfrac{1}{2} & -\dfrac{1}{2}
		\end{pmatrix}$, $\bar{A}_2=\begin{pmatrix}
			1 & 1 \\ \dfrac{1}{2} & -\dfrac{1}{2}
		\end{pmatrix}$, $\bar{A}_3=\begin{pmatrix}
			1 & -1 \\ -\dfrac{1}{2} &\dfrac{1}{2}
		\end{pmatrix}$, $\bar{A}_4=\begin{pmatrix}
			1 & 1 \\ \dfrac{1}{2} &\dfrac{1}{2}
		\end{pmatrix}$ as a candidate LDI for the nonlinear system. We then solve \eqref{LDI_optimization} to check if indeed $\Bar{A}_i$, $i \in \{1,2,3,4\}$ is a LDI for the nonlinear system. The resulting optimal value is $-0.385$, which means that $\bar{A}_1$,$\bar{A}_2$,$\bar{A}_3$,$\bar{A}_4$ is not a LDI for the system.  Figures \ref{Conv_violation_good} and \ref{Conv_violation} show two solutions to \eqref{LDI_optimization} which have the same optimal value of $-0.385$, which illustrates that the optimal solution to \eqref{LDI_optimization} is not necessarily unique.
   \begin{figure}[H]
	\begin{center} 
\includegraphics[width=1\columnwidth]{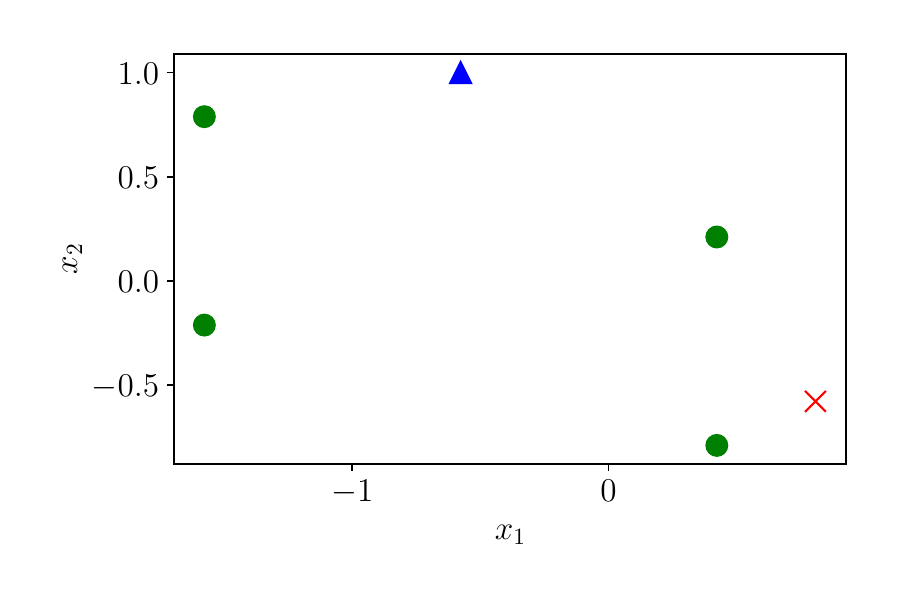} 
		\caption{Example \ref{Example_1}. The blue triangle is the state $x$. The red cross is the actual successor state $x^+$ that results from the nonlinear system equations. The green circles are $\bar{A}_ix$, $i=\{1,2,3,4\}$. The successor state $x^+\notin Co(\bar{A}_ix, i=\{1,2,3,4\})$. The optimal value corresponding to this solution is $-0.385$.}
		\label{Conv_violation_good}  
	\end{center}
\end{figure}	
\begin{figure}[H]
	\begin{center}
		\includegraphics[width=1\columnwidth]{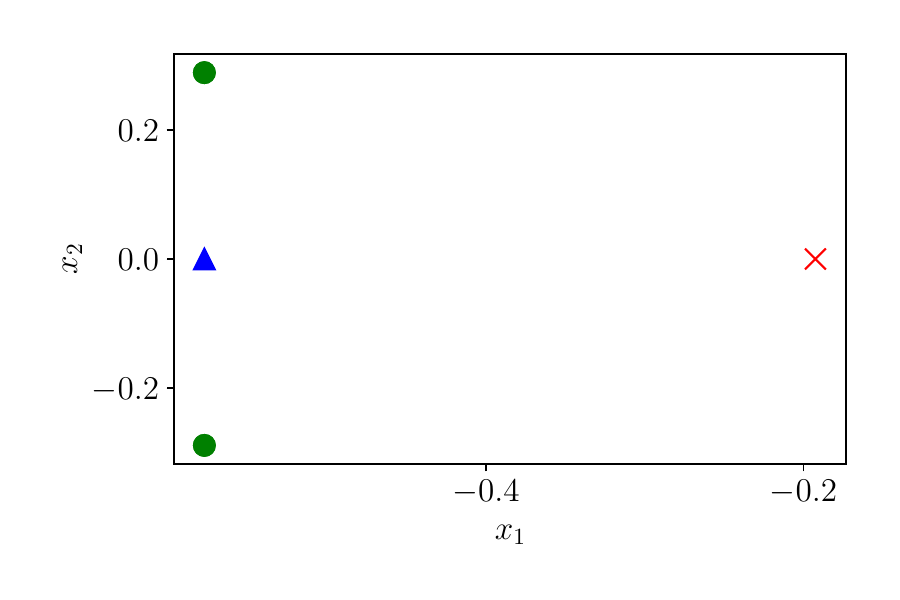} 
		\caption{Example \ref{Example_1} (another optimal solution). The blue triangle is the state $x$. The red cross is the actual successor state $x^+$ that results from the nonlinear system equations. The green circles are $\bar{A}_ix$, $i=\{1,2,3,4\}$. The successor state $x^+\notin Co(\bar{A}_ix, i=\{1,2,3,4\})$. The optimal value corresponding to this solution is $-0.385$.}
		\label{Conv_violation}  
	\end{center}
\end{figure}
The main reason that $\bar{A}_i$, $i=\{1,2,3,4\}$ is not a LDI for the nonlinear system is because the first element in $A(x)$ depends in a nonlinear fashion on $a_1$, and hence a third parameter $a_3\in [0,1]$ is needed for the LDI (to be used in the first element of $A(x)$). This results in a LDI with eight vertex matrices rather than four vertex matrices. Solving \eqref{LDI_optimization} for the new candidate LDI (with the eight vertex matrices) results in a positive optimal solution.  
\end{Example}
Note that in Example \ref{Example_1}, it was clear how to find the LDI using three parameters $a_1$, $a_2$, $a_3$. This is because the nonlinear system could be exactly represented as $x^+=A(x)x$, and $(0,0)$ was the equilibrium pair that we wanted to find the LDI around (see section 2.1.2 in \cite{Blanchini_set_book}). This is not the case for the next example.

\begin{Example}\label{Example_2}
Consider the nonlinear system
    \begin{equation*}
        x^+_1=2^{x_1}+x_2,
    \end{equation*}
    \begin{equation*}
        x^+_2=x_1+2x_2,
    \end{equation*}
    in the compact region $x_1,x_2 \in [-2,2]$. The origin is not an equilibrium point of this system. Note that it is not possible to use \cite{sector_non} to find a LDI for this system.
    
    One equilibrium for this system is $x_s=\begin{pmatrix}
        1 & -1
    \end{pmatrix}^T$. Using the mean value theorem (see Example 2.2 in \cite{Blanchini_set_book} or section 4.3 in \cite{Boyd:94}), we know that for each $x=\begin{pmatrix}
        x_1 & x_2
    \end{pmatrix}^T$, where $x_1,x_2 \in [-2,2]$, there exists $\hat{x}= \begin{pmatrix}
        \hat{x}_1 & \hat{x}_2
    \end{pmatrix}^T$, where $ \hat{x}_1,\hat{x}_2 \in [-2,2]$ such that 
    \begin{equation*}
        \delta x^+=  \frac{\partial f(x)}{\partial x} \vert_{\hat{x}} \delta x,
    \end{equation*}
    where 
    \begin{equation*}
        \delta x=x-x_s=\begin{pmatrix}
        x_1-1 & x_2+1
    \end{pmatrix}^T,
    \end{equation*}
     \begin{equation*}
         \delta x^+=x^+-x_s=\begin{pmatrix}
        x^+_1-1 & x_2^+ +1
    \end{pmatrix}^T,
     \end{equation*}
    \begin{equation*}
        \frac{\partial f(x)}{\partial x}=\begin{pmatrix}
            2^{x_1}\ln{(2)} & 1\\
            1 & 2
        \end{pmatrix}, 
    \end{equation*}
    where $\ln(\cdot)$ denotes the natural logarithm. Since $x_1\in [-2,2]$, $\Bar{A}_1=\begin{pmatrix}
        0.174 & 1\\ 1 &2
    \end{pmatrix}$ and $\Bar{A}_2=\begin{pmatrix}
        2.77 & 1 \\ 1 &2
    \end{pmatrix}$ define a LDI inclusion for the nonlinear system. As is well known, this method for the determination of LDIs is conservative. The conservatism of the obtained LDI can be reduced.
    A tighter candidate LDI is $\Bar{A}^{tight}_1=\begin{pmatrix}
        0.5 & 1\\ 1 &2
    \end{pmatrix}$ and $\Bar{A}^{tight}_2=\begin{pmatrix}
        2 & 1 \\ 1 &2
    \end{pmatrix}$. Solving \eqref{LDI_optimization} for this candidate LDI resulted in a non negative optimal value for $10^6$ random uniformly generated initial values of the optimization using the solver IPOPT \cite{IPOPT}, and hence $\Bar{A}^{tight}_2$ and $\Bar{A}^{tight}_2$ are considered to define a LDI for the nonlinear system.
    
    If we further tighten the LDI to $\Bar{A}^{n}_1=\begin{pmatrix}
        0.6 & 1\\ 1 &2
    \end{pmatrix}$ and $\Bar{A}^{n}_2=\begin{pmatrix}
        1.9 & 1 \\ 1 &2
    \end{pmatrix}$, the resulting optimal value for \eqref{LDI_optimization} becomes $-0.0499$ which means that $\Bar{A}^{n}_1$ and $\Bar{A}^{n}_2$ are not a LDI for the nonlinear system. Figure \ref{Conv_violation_EX2} illustrates the result of \eqref{LDI_optimization} with $\Bar{A}^{n}_1$ and $\Bar{A}^{n}_2$.

    \begin{figure}[H]
	\begin{center} 
\includegraphics[width=1\columnwidth]{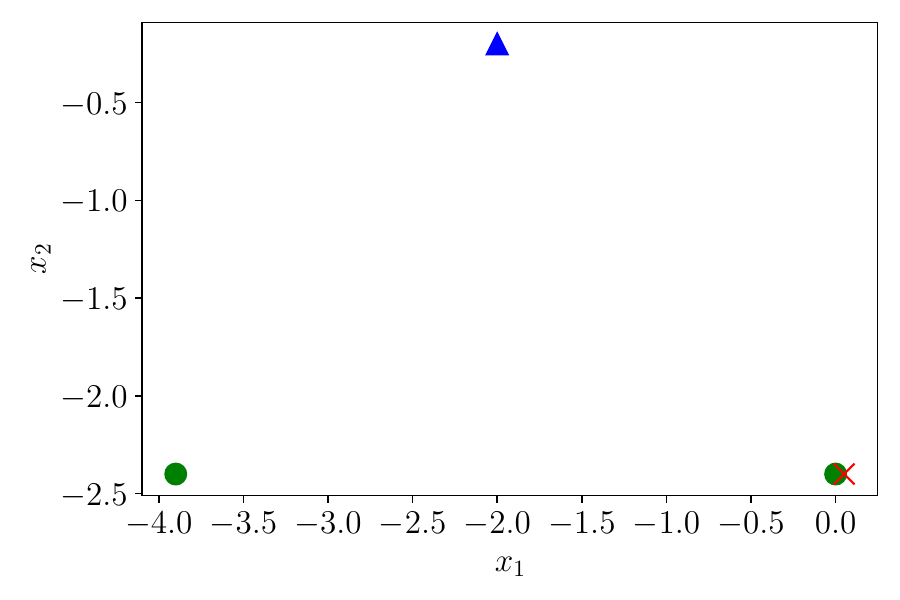} 
		\caption{Example \ref{Example_2} with $\Bar{A}^n_1$ and $\Bar{A}^n_2$. The blue triangle is the state $x$ which is $\begin{pmatrix}
		    -2 & -0.20001
		\end{pmatrix}^T$. The red cross is the actual successor state $x^+=\begin{pmatrix}
		    0.04999 & -2.40002
		\end{pmatrix}^T$ that results from the nonlinear system equations. The green circles are $\bar{A}^n_ix$, $i=\{1,2\}$. The right green circle is at $\begin{pmatrix}
		    -10^{-5} & -2.40002
		\end{pmatrix}^T$. The succesor state $x^+\notin Co(\bar{A}_ix, i=\{1,2\})$. The optimal value corresponding to this solution is $-0.0499$.}
		\label{Conv_violation_EX2}  
	\end{center}
\end{figure}

\end{Example} 
\section{Conclusion}
We have introduced a novel optimization problem which provides a Yes/No certificate on whether a candidate set of vertex linear systems is a LDI for a nonlinear system or not. Our result is valid for nonlinear systems which do not necessarily have the origin as an equilibrium. The proposed method can be used to reduce the conservatism in LDI determination. The benefits our method were illustrated by numerical examples. A disadvantage of the approach is that a non-convex nonlinear optimization problem has to be solved.

\printbibliography

\end{document}